\newtheorem{theorem}{Theorem}
\theoremstyle{plain}
\newtheorem{corollary}{Corollary}
\newtheorem{lemma}{Lemma}
\newtheorem{notation}{Notation}
\newtheorem{remark}{Remark}
\numberwithin{equation}{section}
\begin{document}
\title[Direct and Inverse Problems for the Heat Equation]{Direct and Inverse
Problems for the Heat Equation with a Dynamic type Boundary Condition }
\author{Nazim B. Kerimov$^{\ast }$, Mansur I. Ismailov$^{\ast \ast }$ }
\address{$^{\ast }$Department of Mathematics, Mersin University, Mersin
33343, Turkey $^{\ast \ast }$Department of Mathematics, Gebze Institute of
Technology, Gebze-Kocaeli \ 41400, Turkey}
\email{nazimkerimov@yahoo.com; mismailov@gyte.edu.tr }
\date{March, 02, 2013}
\subjclass[2000]{Primary 35R30 ; Secondary 35K20, 35K05, 34B09}
\keywords{Heat equation, Initial-Boundary value problem, Inverse coefficient
problem, Dinamic type boundary condition, Generalized Fourier method}
\dedicatory{}

\begin{abstract}
This paper considers the initial-boundary value problem for the heat
equation with a dynamic type boundary condition. Under some regularity,
consistency and orthogonality conditions, the existence, uniqueness and
continuous dependence upon the data of the classical solution are shown by
using the generalized Fourier method. This paper also investigates the
inverse problem of finding a time-dependent coefficient of the heat equation
from the data of integral overdetermination condition.
\end{abstract}

\maketitle

\section{Introduction}

Let $T>0$ be a fixed number and $D_{T}=\{(x,t):0<x<1,$ $0<t\leq T\}$.

Consider the following initial-boundary value problem for the heat equation
in $\bar{D}_{T}$:

\begin{equation}
u_{t}=u_{xx}-p(t)u+f(x,t),\text{ }
\end{equation}%
with the initial condition

\begin{equation}
u(x,0)=\varphi (x),\text{ }
\end{equation}%
and the boundary conditions

\begin{equation}
u(0,t)=0,\text{ \ }au_{xx}(1,t)+du_{x}(1,t)-bu(1,t)=0,
\end{equation}%
where $f,$ $\varphi $ are given functions when $0\leq t\leq T$, $0\leq x\leq
1$ and $a,b,d$ are given numbers.

When the coefficient $p(t),$ $0\leq t\leq T$ is also given, the problem of
finding $u(x,t)$ from using equation (1.1), initial condition (1.2) and
boundary conditions (1.3) is termed as the direct (or forward) problem.

This problem can be used in a heat transfer and diffusion processes where a
source parameter is present. Taking into account the equation at $x=1$, in
the case $a\neq 0$, the second boundary condition becomes to the form of
dynamically boundary condition as 
\begin{equation*}
au_{t}(1,t)+du_{x}(1,t)+(ap(t)-b)u(1,t)=af(1,t).
\end{equation*}%
This boundary condition is observed in the process of cooling of a thin
solid bar one end of which is placed in contact in the case of perfect
thermal contact [1, p. 262]. Another possible application of such boundary
condition is announced in [2, p. 79] which represents a boundary reaction in
diffusion of chemical, where the term $du_{x}(1,t)$ represents the diffusive
transport of materials to the boundary.

When the function $p(t),$ $0\leq t\leq T$ is unknown, the inverse problem
formulates a problem of finding a pair of functions $\left\{ p(t),\text{ }%
u(x,t)\right\} $ such that satisfy the equation (1.1), initial condition
(1.2), boundary conditions (1.3) and overdetermination condition%
\begin{equation}
\int\limits_{0}^{1}u(x,t)dx=E(t),0\leq t\leq T.
\end{equation}

If we let $u(x,t)$ present the temperature distribution, then
above-mentioned inverse problem can be regarded as a control problem with
source control. The source control parameter $p(t)$ needs to be determined
from thermal energy $E(t)$.

Because the function $p$ is space independent, $a,b$ and $d$ are constants
and the boundary conditions are linear and homogeneous, the method of
separation of variables is suitable for studying the problems under
consideration. It is well known that, the main difficulty for applying
Fourier method is its basisness, i.e. expansion in terms of eigenfunctions
of auxiliary spectral problem 
\begin{equation}
\left\{ 
\begin{array}{l}
y^{\prime \prime }(x)+\lambda y(x)=0,\text{ }0\leq x\leq 1, \\ 
\text{ }y(0)=0, \\ 
\left( a\lambda +b\right) y(1)=dy^{^{\prime }}(1).%
\end{array}%
\right.
\end{equation}

In contrast to classical Sturm-Liouville problem, this problem has spectral
parameter also in boundary condition. It makes impossible to apply the
classical results to the expansion in terms of eigenfunctions [3,4]. The
spectral analysis of such type of problems started by Walter [5]. The
important developments are made by Fulton [6], Kerimov, Allakhverdiev [7,8],
Binding, Browne, Seddighi [9], Kapustin, Moiseev [10], Kerimov, Poladov
[11]. It is useful to note the paper [12] for which the results on expansion
in terms of eigenfunctions are firmly used in present paper.

The inverse problem of finding the coefficient $p(t)$ in the equation (1.1)
with the nonlocal boundary conditions are considered in the papers [13-16].
In contrast to these papers, in present paper the boundary conditions are
localized to the points $x=0$ and $x=1$. The literature devoted to inverse
problems of a finding time-dependent coefficient for the equation (1.1) with
the localized boundary conditions are so vast, see [17-19], to name only a
few references. The principal difference the boundary conditions in present
paper from the others localized boundary conditions is that the existence
the term $u_{xx}(1,t)$. As is noted, this boundary condition is reduced to a
dinamical type boundary condition by using the expression of the equation
(1.1).

The paper is organized as follows. In Chapter 2, the eigenvalues and
eigenfunctions of the auxiliary spectral problem and some of their
properties are introduced. In Chapter 3, the existence, uniqueness and the
continuous dependence upon the data of the solution of direct problem
(1.1)-(1.3) is proved. Finally in Chapter 4, the existence, uniqueness and
continuous dependence upon the data of the solution of the inverse problem
(1.1)-(1.4) is shown.

\section{Some properties of the auxiliary spectral problem}

Consider the spectral problem (1.4) with $ad>0$. It is known in [9] that,
eigenvalues $\lambda _{n},$ $n=0,1,2,...$ are real and simple. They are
unbounded increasing sequence 
\begin{equation*}
\lambda _{0}<\lambda _{1}<\cdots <\lambda _{n}<\cdots ,
\end{equation*}%
and the eigenfunction $y_{n}(x)$ corresponding to $\lambda _{n}$ has exactly 
$n$ simple zeros in the interval $(0,1)$.

For the positive eigenvalues $\lambda _{n}=\mu _{n}^{2}$, $\mu _{n}$ are the
simple positive roots of characteristic equation $(\frac{a}{d}\mu ^{2}+\frac{%
b}{d})\sin \mu =\mu \cos \mu $. The placement of all eigenvalues with
respect to $\lambda =0$ are as follows:%
\begin{eqnarray*}
\lambda _{0} &<&0<\lambda _{1}<\lambda _{2}<\cdots ,\text{ for }\frac{b}{d}%
>1, \\
\lambda _{0} &=&0<\lambda _{1}<\lambda _{2}<\cdots ,\text{ for }\frac{b}{d}%
=1, \\
0 &<&\lambda _{0}<\lambda _{1}<\lambda _{2}<\cdots ,\text{ for }\frac{b}{d}<1
\end{eqnarray*}

In the case $\frac{b}{d}>1$, the eigenfunctions are $y_{0}(x)=e^{\mu
_{0}x}-e^{-\mu _{0}x},$ $y_{n}(x)=\sin (\mu _{n}x),$ $n=1,2,...,$ which
correspond to eigenvalues $\lambda _{0}=-\mu _{0}^{2}$, $\lambda _{n}=\mu
_{n}^{2},n=1,2,...$. Besides, $\pi n<$ $\mu _{n}<\frac{\pi }{2}+\pi n$.

In the case $\frac{b}{d}=1$, the eigenfunctions are $y_{0}(x)=x,$ $%
y_{n}(x)=\sin (\mu _{n}x),$ $n=1,2,...,$ which correspond to eigenvalues $%
\lambda _{0}=0$, $\lambda _{n}=\mu _{n}^{2},n=1,2,...$. Besides, $\pi n<$ $%
\mu _{n}<\frac{\pi }{2}+\pi n$.

In the case $\frac{b}{d}<1$, the eigenfunctions are $y_{n}(x)=\sin (\mu
_{n}x),$ $n=0,1,2,...,$ which correspond to eigenvalues $\lambda _{n}=\mu
_{n}^{2},$ $n=0,1,2,...$. In addition, $\pi n<$ $\mu _{n}<\frac{\pi }{2}+\pi
n$ for $0\leq \frac{b}{d}<1$ and $\pi n<$ $\mu _{n}<\pi +\pi n$ for $\frac{b%
}{d}<0$.

It is easy to see that, in all of these cases 
\begin{equation}
\int_{0}^{1}y_{n}(x)dx>0,\text{ }n=0,1,2,...\text{.}
\end{equation}

The detailed investigation of the characteristic equation allows us to
determine the following asymptotic of eigenfunctions and eigenvalues:

\begin{eqnarray}
\mu _{n} &=&\pi n+\frac{d}{a\pi n}+O(\frac{1}{n^{3}}),\text{ }  \notag \\
y_{n}(x) &=&\sin (\pi nx)+\left[ \frac{d}{a\pi n}\cos (\pi nx)+O(\frac{1}{%
n^{3}})\right] x,
\end{eqnarray}%
for a sufficiently large $n$.

Let $n_{0}$ be arbitrary fixed nonnegative integer. It is shown in [12] that
the system of eigenfunctions $\left\{ y_{n}(x)\right\} $ $(n=0,1,2,...;n\neq
n_{0})$ is a Riesz basis for $\mathbf{L}_{2}\left[ 0,1\right] $. The system $%
\left\{ u_{n}(x)\right\} $ $(n=0,1,2,...;n\neq n_{0}),$ biortogonal to the
system $\left\{ y_{n}(x)\right\} $ $(n=0,1,2,...;n\neq n_{0})$ has the form 
\begin{equation*}
u_{n}(x)=\frac{y_{n}(x)-\frac{y_{n}(1)}{y_{n_{0}}(1)}y_{n_{0}}(x)}{%
\left\Vert y_{n}\right\Vert _{\mathbf{L}_{2}\mathbf{[}0,1\mathbf{]}}^{2}+%
\frac{a}{d}y_{n}^{2}(1)}\text{,}
\end{equation*}%
that is $\left( y_{n},u_{m}\right) =$ $\delta _{n,m},$ $n,m=0,1,2,...;n,m%
\neq n_{0}$ where $\delta _{n,m}$ is the Kronecker symbol.

The following lemma is true for the systems $\left\{ y_{n}(x)\right\} ,$ $%
\left\{ u_{n}(x)\right\} $ $(n=0,1,2,...;n\neq n_{0})$.

\begin{lemma}
Let $\varphi (x)\in \mathbf{C}^{3}\left[ 0,1\right] $ be arbitrary function
satisfying the conditions 
\begin{equation}
\varphi (0)=\varphi ^{\prime \prime }(0)=0,\text{ }\varphi (1)=\varphi
^{\prime }(1)=\varphi ^{\prime \prime }(1)=0
\end{equation}%
and 
\begin{equation}
\int_{0}^{1}\varphi (x)y_{n_{0}}(x)dx=0.
\end{equation}%
Then the inequalities 
\begin{eqnarray}
\dsum\limits_{\underset{(n\neq n_{0})}{n=0}}^{\infty }\left\vert \lambda
_{n}(\varphi ,y_{n})\right\vert &\leq &C_{0}\left\Vert \varphi \right\Vert _{%
\mathbf{C}^{3}\mathbf{[}0,1\mathbf{]}},\text{ }\dsum\limits_{\underset{%
(n\neq n_{0})}{n=0}}^{\infty }\left\vert \lambda _{n}(\varphi
,u_{n})\right\vert \leq C\left\Vert \varphi \right\Vert _{\mathbf{C}%
^{3}[0,1]} \\
&&\text{ (}C_{0}\text{ and }C\text{ are constants)}  \notag
\end{eqnarray}%
hold.
\end{lemma}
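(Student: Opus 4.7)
The plan is to combine repeated integration by parts with Bessel's inequality, using the asymptotic (2.2) to reduce both series to the $L^2$ norm of $\varphi'''$. First I would split each series at some large $N$; the finitely many small-$n$ terms contribute at most a constant times $\|\varphi\|_{C^3}$ (in fact of $\|\varphi\|_{L^1}$), so everything reduces to controlling the tail, where by (2.2) one may write $y_n(x) = \sin(\pi n x) + r_n(x)$ with $r_n = O(1/n)$ uniformly in $x$ and $\lambda_n = (\pi n)^2 + O(1)$.

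The dominant contribution is then $(\pi n)^2 \int_0^1 \varphi(x) \sin(\pi n x)\,dx$. Three successive integrations by parts, cancelling every boundary term using the five vanishing conditions in (2.3), produce
$$(\pi n)^2 \int_0^1 \varphi(x) \sin(\pi n x)\,dx = -\frac{1}{\pi n} \int_0^1 \varphi'''(x) \cos(\pi n x)\,dx.$$
Cauchy--Schwarz together with Bessel's inequality for the orthogonal system $\{\cos(\pi n x)\}$ in $L^2[0,1]$ then bounds the summed tail by $C\,\|\varphi'''\|_{L^2} \le C\,\|\varphi\|_{C^3}$.

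The cross terms coming from $r_n$ and from the $O(1)$ part of $\lambda_n$ are dispatched by the same mechanism. The most delicate one, of the form $(\lambda_n/(\pi n))(d/a)\int_0^1 x\,\varphi(x)\cos(\pi n x)\,dx$, I would treat by setting $\psi(x) = x\,\varphi(x)$; one easily verifies $\psi(0) = \psi(1) = \psi'(0) = \psi'(1) = 0$ (the hypothesis $\varphi'(1) = 0$ is used here), so two integrations by parts give a factor $1/n^2$ times a Fourier coefficient of $\psi''$, enough for summability against $\lambda_n/(\pi n) = O(n)$ by a second application of Cauchy--Schwarz and Bessel.

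The second inequality in (2.5) should then be almost immediate: the explicit formula for the biorthogonal system combined with the hypothesis $(\varphi, y_{n_0}) = 0$ yields
$$(\varphi, u_n) = \frac{(\varphi, y_n)}{\|y_n\|_{L^2[0,1]}^2 + (a/d)\,y_n^2(1)},$$
whose denominator is bounded away from zero uniformly in $n \neq n_0$ (since $\|y_n\|_{L^2}^2 \to 1/2$ and the remaining finitely many values are strictly positive), so the second bound follows from the first. The main obstacle I expect is the bookkeeping of cross terms at the level where $\lambda_n \sim n^2$ meets the $O(1/n)$ remainders of (2.2), ensuring each resulting piece is summable using exactly the right combination of vanishing boundary values of $\varphi$.
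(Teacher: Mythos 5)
Your overall architecture is right, and most of the pieces (the three integrations by parts on the leading term using $\varphi(0)=\varphi(1)=\varphi''(0)=\varphi''(1)=0$, the treatment of the $x\cos(\pi nx)$ cross term via $\psi=x\varphi$ with $\psi'(1)=\varphi(1)+\varphi'(1)=0$, the Cauchy--Schwarz/Bessel summation, and the reduction of the second inequality to the first through the explicit biorthogonal formula and the orthogonality $(\varphi,y_{n_0})=0$ together with the uniform lower bound on $\left\Vert y_n\right\Vert^2_{\mathbf{L}_2}+\frac{a}{d}y_n^2(1)$) coincide with the paper's. But there is one genuine gap, precisely at the place you flag as ``the main obstacle'' without resolving it: the unstructured remainder in (2.2). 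Writing $y_n(x)=\sin(\pi nx)+\frac{d}{a\pi n}x\cos(\pi nx)+O(\frac{1}{n^{3}})x$, the last piece contributes $|(\varphi,O(\frac{1}{n^{3}})x)|\leq C n^{-3}\left\Vert\varphi\right\Vert_{\mathbf{C}[0,1]}$, and after multiplication by $\lambda_n\sim(\pi n)^2$ this is only $O(\frac 1n)$, which is not summable. Unlike the explicit trigonometric terms, this remainder carries no oscillatory structure against which you can integrate by parts to gain further powers of $\frac 1n$, so ``the same mechanism'' does not dispatch it; your decomposition, taken literally, does not close.

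The paper's key move, which repairs exactly this, is to shed the factor $\lambda_n$ \emph{before} invoking any asymptotics: since $\lambda_ny_n=-y_n''$ and $y_n(0)=0$, two exact integrations by parts together with $\varphi(0)=\varphi(1)=\varphi'(1)=0$ give the identity $\lambda_n(\varphi,y_n)=-(\varphi'',y_n)$, with no error term. Only then is (2.2) substituted, so the $O(\frac{1}{n^{3}})$ remainder is paired with $\varphi''$ and never multiplied by $n^2$; it contributes a convergent $\sum O(\frac{1}{n^{2}})\left\Vert\varphi''\right\Vert_{\mathbf{C}[0,1]}$, and the surviving explicit terms $-(\varphi'',\sin(\pi nx))$ and $-\frac{d}{a\pi n}(x\varphi'',\cos(\pi nx))$ are then handled exactly as you propose (one further integration by parts using $\varphi''(0)=\varphi''(1)=0$, then Schwarz and Bessel). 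An alternative repair within your own scheme would be to note that for $n\geq 1$ the eigenfunctions are exactly $\sin(\mu_nx)$ and to integrate by parts three times against the exact frequency $\mu_n$ (here $\varphi'(1)=0$ is genuinely needed, since $\sin\mu_n\neq0$), which avoids introducing the remainder at all; but as written, with the expansion around $\pi n$, the argument has a divergent error term.
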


\begin{proof}
The asymptotic representation (2.2) implies 
\begin{equation*}
\left\Vert y_{n}\right\Vert _{\mathbf{L}_{2}\mathbf{[}0,1\mathbf{]}}^{2}=%
\frac{1}{2}+O(\frac{1}{n^{2}}),\text{ }y_{n}^{2}(1)=O(\frac{1}{n^{2}}).
\end{equation*}%
Then there exist constants $\alpha $ and $\beta $, such that 
\begin{equation*}
\beta >\left\Vert y_{n}\right\Vert _{\mathbf{L}_{2}\mathbf{[}0,1\mathbf{]}%
}^{2}+\frac{a}{d}y_{n}^{2}(1)>\alpha >0.
\end{equation*}%
According to last inequality, under condition (2.4), the convergence of
series $\dsum\limits_{\underset{(n\neq n_{0})}{n=0}}^{\infty }\left\vert
\lambda _{n}(\varphi ,y_{n})\right\vert $ is equivalent to convergence of $\
\dsum\limits_{\underset{(n\neq n_{0})}{n=0}}^{\infty }$ $\left\vert \lambda
_{n}(\varphi ,u_{n})\right\vert =\dsum\limits_{\underset{(n\neq n_{0})}{n=0}%
}^{\infty }\frac{\left\vert \lambda _{n}(\varphi ,y_{n})\right\vert }{%
\left\Vert y_{n}\right\Vert _{\mathbf{L}_{2}\mathbf{[}0,1\mathbf{]}}^{2}+%
\frac{a}{d}y_{n}^{2}(1)}$. In addition, this inequality allows us to obtain
second estimate in (2.5) if we know first one.

Since $\lambda _{n}y_{n}=-y_{n}^{\prime \prime }$ and $y_{n}(0)=0$, the
equality $\lambda _{n}(\varphi ,y_{n})=-(\varphi ^{\prime \prime },y_{n})$
is obtained with helping two time integration by parts and using (2.3).
Taking into account the representation (2.2) we obtain 
\begin{equation}
\lambda _{n}(\varphi ,y_{n})=-(\varphi ^{\prime \prime },\sin (\pi nx))-%
\frac{d}{a\pi n}(x\varphi ^{\prime \prime },\cos (\pi nx))+\left\Vert
\varphi ^{\prime \prime }\right\Vert _{\mathbf{C}\left[ 0,1\right] }O(\frac{1%
}{n^{2}}).
\end{equation}%
By using Schwarz and Bessel inequalities, it is easy to show that 
\begin{eqnarray}
\dsum\limits_{\underset{(n\neq n_{0})}{n=0}}^{\infty }\frac{d}{a\pi n}%
\left\vert (x\varphi ^{\prime \prime },\cos (\pi nx))\right\vert &\leq
&const\left\Vert x\varphi ^{\prime \prime }\right\Vert _{\mathbf{L}_{2}%
\mathbf{[}0,1\mathbf{]}}\leq const\left\Vert \varphi ^{\prime \prime
}\right\Vert _{\mathbf{C[}0,1\mathbf{]}},  \notag \\
&& \\
\dsum\limits_{\underset{(n\neq n_{0})}{n=1}}^{\infty }\left\Vert \varphi
^{\prime \prime }\right\Vert _{\mathbf{C}\left[ 0,1\right] }O(\frac{1}{n^{2}}%
) &\leq &const\left\Vert \varphi ^{\prime \prime }\right\Vert _{\mathbf{C[}%
0,1\mathbf{]}}.  \notag
\end{eqnarray}%
In addition, by using integration by parts and Schwarz and Bessel
inequalities we obtain that 
\begin{equation}
\dsum\limits_{\underset{(n\neq n_{0})}{n=0}}^{\infty }\left\vert (\varphi
^{\prime \prime },\sin (\pi nx))\right\vert \leq const\left\Vert \varphi
^{\prime \prime \prime }\right\Vert _{\mathbf{C[}0,1\mathbf{]}}.
\end{equation}%
Putting (2.6) and (2.7) in (2.8) yields the inequality (2.5).
\end{proof}

\bigskip Let us introduce the following notation for the simplicity.

\begin{notation}
The class of functions which satisfy the conditions of the Lemma 1 we will
denote as $\mathbf{\Phi }_{n_{0}}$, that is,%
\begin{equation*}
\mathbf{\Phi }_{n_{0}}\equiv \left\{ 
\begin{array}{c}
\varphi (x)\in \mathbf{C}^{3}\left[ 0,1\right] :\varphi (0)=\varphi ^{\prime
\prime }(0)=0,\text{ }\varphi (1)=\varphi ^{\prime }(1)=\varphi ^{\prime
\prime }(1)=0,\text{ } \\ 
\int_{0}^{1}\varphi (x)y_{n_{0}}(x)dx=0%
\end{array}%
\right\}
\end{equation*}
\end{notation}

Because the series $\dsum\limits_{\underset{(n\neq n_{0})}{n=0}}^{\infty
}\left\vert \lambda _{n}(\varphi ,y_{n})\right\vert $ is majorant for the
series $\dsum\limits_{\underset{(n\neq n_{0})}{n=0}}^{\infty }\left\vert
(\varphi ,y_{n})\right\vert $, the following corollary of Lemma 1 hold.

\begin{corollary}
For arbitrary $\varphi (x)\in \Phi _{n_{0}}$ the estimates 
\begin{equation*}
\dsum\limits_{\underset{(n\neq n_{0})}{n=0}}^{\infty }\left\vert (\varphi
,y_{n})\right\vert \leq \bar{C}_{0}\left\Vert \varphi \right\Vert _{\mathbf{C%
}^{3}\mathbf{[}0,1\mathbf{]}},\text{ }\dsum\limits_{\underset{(n\neq n_{0})}{%
n=0}}^{\infty }\left\vert (\varphi ,u_{n})\right\vert \leq \bar{C}\left\Vert
\varphi \right\Vert _{\mathbf{C}^{3}\mathbf{[}0,1\mathbf{]}}
\end{equation*}%
hold, where $\bar{C}_{0}$ and $\bar{C}$ are constants.
\end{corollary}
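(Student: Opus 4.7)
The plan is to deduce the Corollary directly from Lemma 1 by exploiting the fact that the eigenvalues $\lambda_n$ grow like $\pi^2 n^2$, so dividing by $\lambda_n$ (for all but finitely many indices) costs nothing in the summability estimate.

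First I would fix the notation. From the asymptotic $\mu_n = \pi n + O(1/n)$ established in (2.2), we have $|\lambda_n| = \mu_n^2 \to \infty$, so there exists an integer $N_0 \geq 1$ (which we may choose larger than $n_0$) such that $|\lambda_n| \geq 1$ for every $n \geq N_0$. I would then split each of the two series in the Corollary as
\begin{equation*}
\sum_{\substack{n=0 \\ n \neq n_0}}^{\infty} |(\varphi, y_n)| = \sum_{\substack{n=0 \\ n \neq n_0}}^{N_0 - 1} |(\varphi, y_n)| + \sum_{n=N_0}^{\infty} |(\varphi, y_n)|,
\end{equation*}
and similarly for the sum involving $u_n$.

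For the tail, the inequality $|\lambda_n| \geq 1$ gives $|(\varphi, y_n)| \leq |\lambda_n (\varphi, y_n)|$, so Lemma 1 yields
\begin{equation*}
\sum_{n=N_0}^{\infty} |(\varphi, y_n)| \leq \sum_{\substack{n=0 \\ n \neq n_0}}^{\infty} |\lambda_n (\varphi, y_n)| \leq C_0 \|\varphi\|_{C^3[0,1]},
\end{equation*}
and the analogous bound with constant $C$ holds for $u_n$. For the finite head, the Cauchy--Schwarz inequality together with the uniform bound $\|y_n\|_{L_2[0,1]} \leq \text{const}$ (itself a consequence of $\|y_n\|_{L_2}^2 = \tfrac{1}{2} + O(1/n^2)$, noted in the proof of Lemma~1) gives $|(\varphi, y_n)| \leq \text{const} \cdot \|\varphi\|_{L_2[0,1]} \leq \text{const} \cdot \|\varphi\|_{C^3[0,1]}$; summing over the finitely many indices $n < N_0$, $n \neq n_0$, contributes only a constant multiple of $\|\varphi\|_{C^3[0,1]}$. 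For the biorthogonal system $\{u_n\}$, the same argument applies once we observe that the denominator $\|y_n\|_{L_2}^2 + \frac{a}{d} y_n^2(1)$ is bounded below by $\alpha > 0$ (established in the proof of Lemma~1), so $\|u_n\|_{L_2[0,1]}$ is uniformly bounded on the finite range as well.

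Combining the head and tail bounds gives the required estimates with some constants $\bar{C}_0$ and $\bar{C}$. There is no genuine obstacle here; the only point to be careful about is that the factor $|\lambda_n|$ may be smaller than $1$ for a few small indices (in particular $\lambda_0$ can be zero or negative, per the case analysis in Section 2), which is precisely why the split at $N_0$ is needed before invoking the majorization suggested in the remark preceding the Corollary.
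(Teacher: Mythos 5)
Your proof is correct and follows the same route the paper intends: the corollary is deduced from Lemma 1 by majorizing $\sum\left\vert (\varphi ,y_{n})\right\vert $ and $\sum\left\vert (\varphi ,u_{n})\right\vert $ with the $\lambda _{n}$-weighted series of (2.5). In fact your version is more careful than the paper's one-sentence justification, which asserts the majorization outright and silently passes over the finitely many indices with $\left\vert \lambda _{n}\right\vert <1$ (e.g.\ $\lambda _{0}=0$ in the case $b/d=1$, where $\left\vert \lambda _{0}(\varphi ,y_{0})\right\vert =0$ cannot dominate $\left\vert (\varphi ,y_{0})\right\vert $); your split at $N_{0}$ together with the trivial Cauchy--Schwarz bound on the finite head repairs exactly this point.
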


\section{Classical solution of the direct problem}

Let $p(t)\in \mathbf{C[}0,T\mathbf{]}$ be a known continuous function. The
function $u(x,t)\ $from the class\ $\mathbf{C}^{2,0}\left( \bar{D}%
_{T}\right) \mathbf{\cap C}^{2,1}\left( D_{T}\right) $\ that satisfy (1.1)
in $D_{T}$, the initial condition (1.2) and the boundary condition (1.3) is
said to be classical solution of the mixed problem (1.1)-(1.3).

The smootness conditions $f\in \mathbf{C}\left( D_{T}\right) ,$ $\varphi \in 
\mathbf{C}^{2}\mathbf{[}0,1\mathbf{]}$ and the consistency conditions 
\begin{equation}
\varphi (0)=0,\text{ }a\varphi ^{\prime \prime }(1)+d\varphi ^{\prime
}(1)-b\varphi (1)=0
\end{equation}%
are the necessary conditions for the existence of a classical solution of
the problem (1.1)-(1.3).

To construct the formal solution of the problem (1.1)-(1.3) we will use
generalized Fourier method. In accordance with this method, the solution $%
u(x,t)$ is sought in a Fourier series in term of the eigenfunctions $\left\{
y_{n}(x)\right\} $ $(n=0,1,2,...;n\neq n_{0})$ of auxiliary spectral problem
(1.5):%
\begin{equation*}
u(x,t)=\dsum\limits_{\underset{(n\neq n_{0})}{n=0}}^{\infty
}v_{n}(t)y_{n}(x),\text{ }v_{n}(t)=(u,u_{n}).
\end{equation*}%
For the functions $v_{n}(t),n=0,1,2,...;n\neq n_{0}$ we obtain the Cauchy
problem%
\begin{eqnarray*}
v_{n}^{\prime }(t)+(p(t)+\lambda _{n})v_{n}(t) &=&f_{n}(t), \\
v_{n}(0) &=&\varphi _{n},\text{ }
\end{eqnarray*}%
where $f_{n}(t)=(f,y_{n}),\varphi _{n}=(\varphi ,u_{n})$.

Solving these Cauchy problems, we obtain 
\begin{equation*}
v_{n}(t)=\varphi _{n}e^{-\int_{0}^{t}[p(s)+\lambda
_{n}]ds}+\int_{0}^{t}f_{n}(s)e^{-\int_{s}^{t}[p(\tau )+\lambda _{n}]d\tau }ds
\end{equation*}%
and the formal solution of the mixed problem (1.1)-(1.3) is expressed via
the series%
\begin{equation}
u(x,t)=\dsum\limits_{\underset{(n\neq n_{0})}{n=1}}^{\infty }\left[ \varphi
_{n}e^{-\lambda _{n}t-\int_{0}^{t}p(s)ds}+\int_{0}^{t}f_{n}(s)e^{-\lambda
_{n}(t-s)-\int_{s}^{t}p(\tau )d\tau }ds\right] y_{n}(x).
\end{equation}

Now we prove a theorem on the existence of the classical solution of the
problem (1.1)-(1.3). \bigskip

\begin{theorem}
(Existence) If $p(t)\in \mathbf{C[}0,T\mathbf{]},$ $f(x,t)\in \mathbf{C}%
\left( \bar{D}_{T}\right) ,\varphi (x)\in \mathbf{\Phi }_{n_{0}}$ and $%
f(x,t)\in \mathbf{\Phi }_{n_{0}}$ for every $t\in \lbrack 0,T]$. Then the
series (3.2) gives a classical solution of the problem (1.1)-(1.3) and $%
u(x,t)$ belongs to $\mathbf{C}^{2,1}\left( \bar{D}_{T}\right) $.
\end{theorem}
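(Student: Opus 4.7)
The approach follows the generalized Fourier method: show that the formal series (3.2) and its termwise derivatives $u_t, u_{xx}$ converge uniformly on $\bar{D}_T$, and then verify (1.1)--(1.3) term-by-term. The engine is Lemma~1 together with its Corollary~1, which convert the $\mathbf{C}^{3}$-regularity plus boundary and orthogonality conditions packaged in $\mathbf{\Phi}_{n_0}$ into summability of the Fourier coefficients $\varphi_n = (\varphi, u_n)$ and $f_n(t) = (f(\cdot,t), y_n)$, both weighted and unweighted by $\lambda_n$.

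First I would establish the three basic estimates needed to dominate (3.2). By the asymptotics (2.2) the eigenfunctions $y_n$ are uniformly bounded on $[0,1]$. By the spectral discussion of Section~2, at most one eigenvalue (namely $\lambda_0$ in the case $b/d>1$) can be negative, so
\begin{equation*}
\exp\!\Bigl(-\lambda_n t - \int_0^t p(s)\,ds\Bigr) \le K, \qquad n\neq n_0,\ (x,t)\in\bar{D}_T,
\end{equation*}
for a constant $K = K(T,\|p\|_{\mathbf{C}},\lambda_0)$. Combining these with Corollary~1 applied to $\varphi$ and pointwise to $f(\cdot,t)$ yields absolute and uniform convergence of (3.2), so $u\in\mathbf{C}(\bar{D}_T)$.

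For $u_t$ and $u_{xx}$ the termwise derivatives are respectively $v_n'(t)y_n(x) = [f_n(t) - (\lambda_n + p(t))v_n(t)]y_n(x)$ and $v_n(t)y_n''(x) = -\lambda_n v_n(t)y_n(x)$, so both reduce to controlling $\sum|\lambda_n v_n(t)|$. From the explicit formula for $v_n(t)$ one estimates
\begin{equation*}
|\lambda_n v_n(t)| \le K|\lambda_n\varphi_n| + K\int_0^t |\lambda_n f_n(s)|\,ds,
\end{equation*}
and the first estimate of Lemma~1 applied to $\varphi$ and to $f(\cdot,s)$ makes both sides summable in $n$, the integral uniformly in $t$ under the standing continuity hypothesis on $f$. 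Hence the differentiated series converge absolutely and uniformly on $\bar{D}_T$, proving $u\in\mathbf{C}^{2,1}(\bar{D}_T)$ and legitimising termwise differentiation.

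Verification of the equations is then routine: by construction $v_n' + (\lambda_n + p(t))v_n = f_n(t)$, whence $u_t - u_{xx} + p\,u = \sum f_n(t)y_n(x) = f(x,t)$, the last equality being the biorthogonal expansion of $f(\cdot,t)\in\mathbf{\Phi}_{n_0}$ in the Riesz basis $\{y_n\}_{n\neq n_0}$; similarly $u(x,0) = \sum\varphi_n y_n(x) = \varphi(x)$. The boundary conditions follow termwise from $y_n(0)=0$ and from the identity $a y_n''(1) + d y_n'(1) - b y_n(1) = -(a\lambda_n+b)y_n(1) + d y_n'(1) = 0$ built into (1.5). I expect the main obstacle to be the uniform convergence of the $\lambda_n$-weighted series for $u_t$ and $u_{xx}$: this is precisely the step that consumes the five vanishing conditions at the endpoints and the orthogonality $\int_0^1\varphi\,y_{n_0}\,dx = 0$ in the definition of $\mathbf{\Phi}_{n_0}$, through the two integrations by parts used to prove Lemma~1.
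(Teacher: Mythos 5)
Your argument follows the paper's proof almost step for step: Lemma~1 supplies the convergence of $\sum_{n\neq n_{0}}|\lambda_{n}\varphi_{n}|$ and $\sum_{n\neq n_{0}}|\lambda_{n}|\int_{0}^{T}|f_{n}(s)|\,ds$, these majorize the series (3.2) together with its $t$- and $xx$-termwise derivatives (the latter via $y_{n}''=-\lambda_{n}y_{n}$ and the boundedness of the exponential factors), and the equation, initial and boundary conditions are then verified termwise. That part is correct and is exactly the paper's route.

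There is, however, one step you skip that the paper treats explicitly: the first $x$-derivative. You list only $u_{t}$ and $u_{xx}$ as the derivatives to be controlled, but $u\in\mathbf{C}^{2,1}(\bar{D}_{T})$ requires $u_{x}$ continuous, and the standard theorem legitimising two termwise differentiations in $x$ needs the intermediate series $\sum_{n\neq n_{0}}v_{n}(t)y_{n}'(x)$ to converge (uniform convergence of the second-derivative series alone does not suffice without convergence of the first-derivative series at least at one point). This in turn requires a bound on $|y_{n}'(x)|$ in terms of $\lambda_{n}$, which is not automatic from what you have written. The paper obtains it by noting that $y_{n}'$ has a zero $\omega\in(0,1)$ (since $y_{n}$ has $n$ simple zeros) and writing $y_{n}'(x)=\lambda_{n}\int_{x}^{\omega}y_{n}(s)\,ds$, whence $|y_{n}'(x)|\leq M|\lambda_{n}|$ and the same majorant $\sum|\lambda_{n}v_{n}(t)|$ works; alternatively the asymptotics (2.2) give $|y_{n}'(x)|=O(\mu_{n})=O(\sqrt{\lambda_{n}})$. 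The gap is small and easily repaired, but as written your proof does not deliver $u_{x}\in\mathbf{C}(\bar{D}_{T})$ nor fully justify the termwise computation of $u_{xx}$.
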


\begin{proof}
By the hypotesis, we obtain from Lemma 1 that the series 
\begin{equation}
M_{1}\dsum\limits_{\underset{(n\neq n_{0})}{n=0}}^{\infty }\left\vert
\lambda _{n}\varphi _{n}\right\vert \text{ and }M_{2}\dsum\limits_{\underset{%
(n\neq n_{0})}{n=0}}^{\infty }\left\vert \lambda _{n}\right\vert
\int_{0}^{T}\left\vert f_{n}(s)\right\vert ds
\end{equation}%
($M_{1}$ and $M_{2}$ are some constants) are convergent. These series are
the majorizing series of (3.2) and its $x-$partial, $xx-$partial
derivatives. The majorizing series for the $x-$partial derivative of (3.2)
follows from the fact that the functions $y_{n}^{\prime }(x),n>2$ have at
least one zero in $(0,1)$, since the eigenfunction $y_{n}(x)$ corresponding
to $\lambda _{n}$ has exactly $n$ simple zeros in the interval $(0,1)$. In
this case, the equality $\lambda _{n}y_{n}=-y_{n}^{\prime \prime }$ implies $%
y_{n}^{\prime }(x)=$ $\lambda _{n}\int_{x}^{\omega }y_{n}(s)ds$, where $%
\omega \in (0,1)$ are a zero of $y_{n}^{\prime }(x)$. Because the sequence $%
y_{n}(x),n=1,2,\ldots $ is uniformly bounded, the series $\dsum\limits_{%
\underset{(n\neq n_{0})}{n=0}}^{\infty }\left\vert \varphi _{n}\right\vert
\left\vert y_{n}^{\prime }(x)\right\vert $ can be majorizing by $%
M_{3}\dsum\limits_{\underset{(n\neq n_{0})}{n=0}}^{\infty }\left\vert
\varphi _{n}\right\vert ,$ where $M_{3}$ is some constant. Thus, the series
(3.2) and its $x-$partial, $xx-$partial derivatives are uniformly convergent
in $\bar{D}_{T}$. Since these series uniformly convergent, their sums $%
u(x,t) $, $u_{x}(x,t)$ and $u_{xx}(x,t)$ are continuos in $\bar{D}_{T}$. The
series (3.3) are also majorizing series for the $t-$partial derivative of
the (3.2). Therefore $u_{t}(x,t)$ also continuous in $\bar{D}_{T}$. Thus, $%
u(x,t)\in \mathbf{C}^{2,1}\left( \bar{D}_{T}\right) $ and satisfies the
condition (1.1)-(1.3) by the superposition principle.

The proof of the theorem is complete.
\end{proof}

\begin{remark}
The existence of the classical solution of the problem (1.1)-(1.3) can be
obtained for each of the classes of functions $\mathbf{\Phi }_{n_{0}}$, $%
n_{0}=0,1,2,\ldots $, therefore, the classical solution exists for the union 
$\mathbf{\Phi }\equiv \dbigcup\limits_{n_{0}=0}^{\infty }\mathbf{\Phi }%
_{n_{0}}$. It can be expected that the integral condition (2.4) is not
essential for the existence of the classical solution.
\end{remark}

\begin{remark}
\bigskip The uniqueness of the solution of the problem (1.1)-(1.3), under
the conditions of the Theorem 1 is obtained from the uniqueness of the
representation (3.2). The uniqueness of the solution under smoothness and
consistency conditions (3.1) is obtained by using maximum-minimum principle.
It will be done in next theorem.
\end{remark}

\begin{lemma}
\bigskip (maximum-minimum principle) Let $p(t)\geq 0,t\in \lbrack 0,T]$, and 
$u(x,t)\in $\ $\mathbf{C}^{2,0}\left( \bar{D}_{T}\right) \cap \mathbf{C}%
^{2,1}\left( D_{T}\right) $ satisfies the equation (1.1) in $D_{T}$. If $%
f(x,t)\leq 0$ in $D_{T}$ then 
\begin{equation*}
u(x,t)\leq \max \left\{ 0,\underset{0\leq x\leq 1}{\max }u(x,0),\underset{%
0\leq t\leq T}{\max }u(0,t),\underset{0\leq t\leq T}{\max }u(1,t)\right\} .
\end{equation*}%
If $f(x,t)\geq 0$ in $D_{T}$ then 
\begin{equation*}
u(x,t)\geq \min \left\{ 0,\underset{0\leq x\leq 1}{\min }u(x,0),\underset{%
0\leq t\leq T}{\min }u(0,t),\underset{0\leq t\leq T}{\min }u(1,t)\right\} .
\end{equation*}
\end{lemma}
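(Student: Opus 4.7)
The plan is a standard parabolic weak maximum principle argument, adapted to accommodate the lower-order term $-p(t)u$. First, replacing $(u,f)$ with $(-u,-f)$ interchanges the two claims, so it suffices to prove the maximum inequality under the hypothesis $f\leq 0$ in $D_T$. Set
\[
M=\max\bigl\{0,\,\max_{0\leq x\leq 1}u(x,0),\,\max_{0\leq t\leq T}u(0,t),\,\max_{0\leq t\leq T}u(1,t)\bigr\},
\]
and aim to show $u\leq M$ throughout $\bar D_T$.

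I would argue by contradiction. Assume some $(x^\ast,t^\ast)\in D_T$ satisfies $u(x^\ast,t^\ast)>M$, and set $\varepsilon=(u(x^\ast,t^\ast)-M)/(2T)>0$. The auxiliary function $v(x,t)=u(x,t)-\varepsilon t$ then satisfies $v(x^\ast,t^\ast)\geq M+\varepsilon T>M$, while $v\leq M$ on each of the three parabolic-boundary pieces $\{t=0\}$, $\{x=0\}$, and $\{x=1\}$. Continuity of $v$ on the compact set $\bar D_T$ therefore forces its maximum to be attained at some interior point $(x_0,t_0)$ with $0<x_0<1$ and $0<t_0\leq T$.

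At this point the calculus conditions $v_{xx}(x_0,t_0)\leq 0$ and $v_t(x_0,t_0)\geq 0$ hold (the latter interpreted as a one-sided inequality if $t_0=T$), which translate via $v=u-\varepsilon t$ into $u_{xx}(x_0,t_0)\leq 0$ and $u_t(x_0,t_0)\geq \varepsilon$. Moreover $u(x_0,t_0)=v(x_0,t_0)+\varepsilon t_0>M\geq 0$, so $p(t_0)\,u(x_0,t_0)\geq 0$. Substituting into (1.1) gives
\[
\varepsilon\leq u_t(x_0,t_0)=u_{xx}(x_0,t_0)-p(t_0)u(x_0,t_0)+f(x_0,t_0)\leq 0,
\]
the desired contradiction. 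The minimum statement then follows by applying the maximum statement to $-u$.

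The only real subtlety, and the reason for introducing the $-\varepsilon t$ perturbation, is to upgrade the pointwise inequality to a strict lower bound $u_t\geq\varepsilon>0$; without the perturbation one would merely have $u_t\geq 0$ and the sign contradiction against $-p(t_0)u(x_0,t_0)\leq 0$ and $f(x_0,t_0)\leq 0$ would degenerate to an equality chain rather than a contradiction. Note that the dynamic-type boundary condition at $x=1$ is never invoked: only the PDE on $D_T$ and the pointwise values of $u$ on the three parabolic-boundary faces enter, exactly as in the classical weak maximum principle for the heat equation. The positivity of $p$ is used purely to guarantee $-p(t_0)u(x_0,t_0)\leq 0$ at the candidate interior maximum, where $u>0$ has already been secured by the choice of $M\geq 0$.
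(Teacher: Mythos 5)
Your proof is correct: the paper itself omits the argument for this lemma and simply cites [20, p.~390], and what you have written is precisely the standard weak-maximum-principle proof found there --- perturbation by $-\varepsilon t$ to force $u_t\geq \varepsilon>0$ at a candidate interior maximum, with the sign of the zeroth-order term $-p(t_0)u(x_0,t_0)$ controlled because including $0$ in the definition of $M$ guarantees $u(x_0,t_0)>M\geq 0$ there. The reduction of the minimum statement to the maximum statement via $(u,f)\mapsto(-u,-f)$ and the handling of the case $t_0=T$ by a one-sided derivative are both handled correctly, so nothing further is needed.
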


The proof of this lemma is omitted because it can be found in [20, p. 390].

\begin{theorem}
(Uniqueness and continuous dependence upon the data)The classical solution
of the problem (1.1)-(1.3), with $p(t)\geq 0$ is unique and depends
continuously on $f(x,t)\in \mathbf{C}\left( \bar{D}_{T}\right) $ and $%
\varphi (x)\in \mathbf{C}^{2}[0,1]$ in the sense that 
\begin{equation}
\left\Vert u-\tilde{u}\right\Vert _{\mathbf{C}\left( \bar{D}_{T}\right)
}\leq \left\Vert \varphi -\tilde{\varphi}\right\Vert _{\mathbf{C}%
[0,1]}+(1+\left\vert b\right\vert )T\left\Vert f-\tilde{f}\right\Vert _{%
\mathbf{C}\left( \bar{D}_{T}\right) },
\end{equation}%
where $u(x,t)$ and $\tilde{u}(x,t)$ are the classical solutions of
(1.1)-(1.3) with the data $f,\varphi $ and $\tilde{f},\tilde{\varphi}$,
respectively.

\begin{proof}
Let $u(x,t)$ be the classical solution of the problem (1.1)-(1.3). Introduce
the notations 
\begin{equation*}
R=\left\Vert f\right\Vert _{\mathbf{C}\left( \bar{D}_{T}\right) },\text{ }%
K=\left\Vert \varphi \right\Vert _{\mathbf{C}[0,1]}\text{.}
\end{equation*}%
Construct the function 
\begin{equation*}
g(x,t)=u(x,t)-Rt\text{.}
\end{equation*}%
This function is a classical solution of the mixed problem 
\begin{equation*}
u_{t}=u_{xx}-p(t)u+f(x,t)-R-p(t)Rt,\text{ }
\end{equation*}%
\begin{equation*}
u(x,0)=\varphi (x),\text{ }
\end{equation*}%
\begin{equation*}
u(0,t)=-Rt,\text{ }au_{xx}(1,t)+du_{x}(1,t)-bu(1,t)=bRt.
\end{equation*}

Allowing for $f(x,t)-R-p(t)Rt\leq 0,$ $bRt\leq \left\vert b\right\vert RT$
and using maximum principle, we obtain the estimate $g(x,t)\leq \max \left\{
K,\left\vert b\right\vert RT\right\} $, i.e. 
\begin{equation*}
u(x,t)\leq \max \left\{ K,\left\vert b\right\vert RT\right\} +RT\leq
K+(1+\left\vert b\right\vert )RT\text{.}
\end{equation*}%
Similarly, if we introduce the function 
\begin{equation*}
h(x,t)=u(x,t)+Rt
\end{equation*}%
and use the minimum principle, we arrive at a opposite estimate:%
\begin{equation*}
u(x,t)\geq -\max \left\{ K,\left\vert b\right\vert RT\right\} -RT\geq
-K-(1+\left\vert b\right\vert )RT\text{.}
\end{equation*}%
Thus, if $u(x,t)$ is the classical solution of the problem (1.1)-(1.3) we
have the estimate%
\begin{equation}
\left\Vert u\right\Vert _{\mathbf{C}\left( \bar{D}_{T}\right) }\leq
\left\Vert \varphi \right\Vert _{\mathbf{C}[0,1]}+(1+\left\vert b\right\vert
)T\left\Vert f\right\Vert _{\mathbf{C}\left( \bar{D}_{T}\right) }.
\end{equation}%
The uniqueness follows from the fact that by virtue of the estimate (3.5),
the homogeneous problem (1.1)-(1.3) (i.e. with $f=0,$ and $\varphi =0$) has
only a zero classical solution.

To prove continuous dependence on data we study the difference $%
v(x,t)=u(x,t)-\tilde{u}(x,t)$. This function is a classical solution of
(1.1)-(1.3) with $f-\tilde{f}$, and $\varphi -\tilde{\varphi}$ substituted
for $f$ and $\varphi $, respectively. Applying the inequality (3.5) to $%
v(x,t)$ we arrive at the estimate (3.4).
\end{proof}
\end{theorem}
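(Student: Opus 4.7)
My plan is to derive a single a priori maximum-norm estimate of the form
$\left\Vert u\right\Vert _{\mathbf{C}\left( \bar{D}_{T}\right) } \leq \left\Vert \varphi\right\Vert _{\mathbf{C}[0,1]} + (1+\left\vert b\right\vert )T\left\Vert f\right\Vert _{\mathbf{C}\left( \bar{D}_{T}\right) }$
and then to deduce both conclusions from it. Uniqueness follows by applying the estimate to a classical solution of the homogeneous problem ($f\equiv 0$, $\varphi \equiv 0$), which must then vanish. Continuous dependence (the bound (3.4)) is obtained by linearity: the difference $v = u-\tilde{u}$ is itself a classical solution of (1.1)--(1.3) with data $(f-\tilde{f},\varphi - \tilde{\varphi})$, so the a priori estimate applied to $v$ is exactly (3.4).

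To produce the a priori estimate, I would rely on Lemma 2, whose sign hypothesis $p(t)\geq 0$ is exactly what the theorem provides. The strategy is to subtract off a simple function of $t$ so as to convert the inhomogeneous data into data of a definite sign. Setting $R=\left\Vert f\right\Vert_{\mathbf{C}(\bar{D}_T)}$ and $K=\left\Vert \varphi \right\Vert _{\mathbf{C}[0,1]}$, I introduce $g(x,t)=u(x,t)-Rt$. Direct substitution into (1.1) shows that $g$ satisfies the heat equation with modified source $f(x,t)-R-p(t)Rt$, which is non-positive in $D_T$ since $f\leq R$ and $p,R,t\geq 0$. The initial data $g(x,0)=\varphi(x)$ are bounded by $K$, the left Dirichlet data $g(0,t)=-Rt$ are non-positive, and the boundary condition at $x=1$ acquires a right-hand side of magnitude at most $\left\vert b\right\vert RT$. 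Invoking the maximum half of Lemma 2 then gives $g(x,t)\leq \max\{K,\left\vert b\right\vert RT\}$, which rearranges to $u(x,t)\leq K+(1+\left\vert b\right\vert )RT$. The symmetric construction with $h(x,t)=u(x,t)+Rt$ and the minimum half of Lemma 2 yields the matching lower bound, completing the a priori estimate.

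The main delicacy lies in the boundary term at $x=1$. Because this condition involves $u_{xx}(1,t)$, the translation $u\mapsto u-Rt$ does not simply shift a Dirichlet datum; rather the condition picks up an inhomogeneity of size $O(\left\vert b\right\vert RT)$. This is exactly what produces the factor $(1+\left\vert b\right\vert )$ in the final estimate, and it requires that Lemma 2 be applied in a way that correctly controls $g(1,t)$ in the presence of the modified right-hand side of the boundary condition. Once the sign of the new source and the size of the boundary inhomogeneity are tracked, the rest of the argument is bookkeeping, and uniqueness plus continuous dependence fall out by the linearity reductions described above.
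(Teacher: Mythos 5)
Your proposal follows the paper's proof essentially verbatim: the same barrier functions $g=u-Rt$ and $h=u+Rt$, the same sign bookkeeping for the modified source $f-R-p(t)Rt\leq 0$, the same invocation of the maximum--minimum principle (Lemma~2) to reach the a priori bound, and the same linearity reductions giving uniqueness and the estimate (3.4). The delicacy you flag at $x=1$ --- that Lemma~2 bounds the interior by $\max_{t}g(1,t)$, which the inhomogeneous dynamic boundary condition does not directly control --- is handled no more explicitly in the paper's own argument than in yours.
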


\section{Classical solution of the inverse problem}

Let $p(t),t\in \lbrack 0,T]$ be a unknown function. The pair $%
\{p(t),u(x,t)\} $ from the class $\mathbf{C}[0,T]\times (\mathbf{C}^{2,1}(%
\bar{D}_{T})\cap \mathbf{C}^{2,0}(D_{T}))$ for which the conditions
(1.1)--(1.4) are satisfied, is called a classical solution of the inverse
problem (1.1)--(1.4).

We have the following assumptions on $\varphi ,$ $E$ and $f.$\bigskip

\begin{enumerate}
\item[$($A$_{1})$] $%
\begin{array}{cccc}
(\text{A}_{1})_{1} & \varphi (x)\in \mathbf{\Phi }_{n_{0}}; & (\text{A}%
_{1})_{2} & \varphi _{0}>0,\varphi _{n}\geq 0,n=0,1,2,...(n\neq n_{0}),%
\end{array}%
$

\item[$($A$_{2})$] $%
\begin{array}{cc}
(\text{A}_{2})_{1} & E(t)\in \mathbf{C}^{1}\left[ 0,T\right] ;\text{ }%
E(0)=\int\limits_{0}^{1}\varphi (x)dx; \\ 
(\text{A}_{2})_{2} & \text{ }E(t)>0,\forall t\in \left[ 0,T\right] ;%
\end{array}%
$

\item[$($A$_{3})$] \bigskip $%
\begin{array}{cc}
(\text{A}_{3})_{1} & f(x,t)\in \mathbf{C}\left( \overline{D}_{T}\right) ;%
\text{ }f(x,t)\in \mathbf{\Phi }_{n_{0}},\text{ }\forall t\in \left[ 0,T%
\right] ; \\ 
(\text{A}_{3})_{2} & f_{n}(\tau )\geq 0,\text{ }n=0,1,2,...;n\neq n_{0};%
\end{array}%
$

where $\varphi _{n}=\int\limits_{0}^{1}\varphi (x)u_{n}(x)dx,$ $%
f_{n}(t)=\int\limits_{0}^{1}f(x,t)u_{n}(x)dx,$ $n=0,1,2,...$.
\end{enumerate}

The main result is presented as follows.

\begin{theorem}
(Existence and uniqueness) Let $\left( \text{A}_{1}\right) -\left( \text{A}%
_{3}\right) $ be satisfied. Then the inverse problem (1.1)-(1.4) has a
unique classical solution.
\end{theorem}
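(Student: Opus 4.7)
The plan is to reduce (1.1)--(1.4) to a nonlinear fixed-point system for $p(t)$, solve it by a contraction argument, and then recover $u(x,t)$ from Theorem~1. Assuming a classical solution $(p,u)$ exists, expand $u(x,t)=\sum_{n\neq n_0}v_n(t)y_n(x)$ in the Riesz basis from Section~2; by the computations in Section~3 the Fourier coefficients $v_n(t)$ have the representation (3.2). Integrating in $x$, invoking (1.4), and writing $A_n:=\int_0^1 y_n(x)\,dx>0$ (positive by (2.1)) gives
$$E(t)=\sum_{n\neq n_0}A_n v_n(t).$$
Under $(\text{A}_1)_1$ and $(\text{A}_3)_1$, Lemma~1 together with the boundedness of $A_n$ (which follows from the asymptotic (2.2)) makes the series $\sum|\lambda_n A_n v_n(t)|$ absolutely and uniformly convergent on $[0,T]$, so termwise differentiation is legitimate. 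Replacing $v_n'$ via the ODE and using $(\text{A}_2)_2$ to invert $E(t)$ yields the explicit formula
$$p(t)=\frac{1}{E(t)}\left[-E'(t)+\sum_{n\neq n_0}A_n f_n(t)-\sum_{n\neq n_0}\lambda_n A_n v_n(t)\right].\qquad (\ast)$$
Together with (3.2), $(\ast)$ is the desired nonlinear system.

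Next, define $\Phi:p\mapsto\widehat p$ on $C[0,T_0]$ by first computing $v_n[p]$ from (3.2) and then $\widehat p$ from $(\ast)$. A closed ball about a convenient reference function is invariant under $\Phi$ for sufficiently small $T_0$. For the contraction, observe that $(\ast)$ depends on $p$ only through the last sum; from (3.2), $v_n[p_1](t)-v_n[p_2](t)$ carries a factor $e^{-\int p_1}-e^{-\int p_2}$ which is uniformly $O(T_0\|p_1-p_2\|_\infty)$ in $n$, while $\sum|\lambda_n A_n\varphi_n|$ and $\sup_{t\in[0,T_0]}\sum|\lambda_n A_n f_n(t)|$ are finite by Lemma~1. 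Banach's theorem produces a unique fixed point $p\in C[0,T_0]$, and the resulting a~priori bound (depending only on the data) allows continuation to all of $[0,T]$. Given this $p\in C[0,T]$, Theorem~1 furnishes $u\in C^{2,1}(\bar D_T)$ satisfying (1.1)--(1.3); that (1.4) also holds follows because $\mathcal{E}(t):=\sum A_n v_n(t)$ and $E(t)$ satisfy the same linear first-order ODE and, by $(\text{A}_2)_1$, share the initial value $\int_0^1\varphi(x)\,dx$, hence coincide.

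For uniqueness, let $(p_i,u_i)$ be two classical solutions and set $q=p_1-p_2$, $w_n=v_n^{(1)}-v_n^{(2)}$. From (3.2) and $(\ast)$ a direct estimate gives
$$|q(t)|+\sum_{n\neq n_0}|\lambda_n A_n w_n(t)|\le C\int_0^t\left(|q(s)|+\sum_{n\neq n_0}|\lambda_n A_n w_n(s)|\right)ds,$$
and Gronwall forces $q\equiv 0$, whence $u_1\equiv u_2$. The main obstacle is the Lipschitz estimate for the series $\sum\lambda_n A_n v_n[p](t)$ as a functional of $p$: since $\lambda_n$ is unbounded, summability must be harvested entirely from Lemma~1's decay of $(\varphi,y_n)$ and $(f(\cdot,t),y_n)$, while the $p$-dependence has to be extracted from the exponentials in (3.2) as an $O(T_0)$ factor in order to produce a genuine contraction. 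The positivity hypotheses $(\text{A}_1)_2$ and $(\text{A}_3)_2$ play no role in the contraction itself, but they ensure $v_n(t)\ge 0$ and hence keep the denominator in $(\ast)$ self-consistently positive, in harmony with $(\text{A}_2)_2$.
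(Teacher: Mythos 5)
Your reduction is a genuinely different route from the paper's, and it has one gap that is not cosmetic. You differentiate the overdetermination condition to obtain the explicit nonlinear formula $(\ast)$ for $p$ and then run a contraction on a short interval $[0,T_0]$; the step ``the resulting a priori bound (depending only on the data) allows continuation to all of $[0,T]$'' is asserted but never established, and it is exactly the hard point. The unknown $p$ enters $(\ast)$ only through $v_n[p]$, which carries the factors $e^{-\int_0^t p}$ and $e^{-\int_s^t p}$; estimating $(\ast)$ with Lemma 1 therefore yields an inequality of the shape $|p(t)|\leq C_1+C_2\,e^{\int_0^t|p(s)|\,ds}$, whose comparison ODE blows up in finite time, so Gronwall gives no bound on $[0,T]$ and the local solution need not extend. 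Nor can you simply restart the contraction at $t=T_0$ with initial datum $u(\cdot,T_0)$: membership of $u(\cdot,T_0)$ in $\mathbf{\Phi}_{n_0}$ (the $\mathbf{C}^3$ regularity and the endpoint conditions required by Lemma 1) is not known to propagate forward in time. Your uniqueness argument via Gronwall, and the verification that $\mathcal{E}(t)=\sum_{n\neq n_0}A_nv_n(t)$ and $E(t)$ satisfy the same first-order ODE with the same initial value, are sound.

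The paper avoids all of this with the substitution $q(t)=e^{\int_0^t p(s)\,ds}$: multiplying $E(t)=\sum_{n\neq n_0}A_nv_n(t)$ by $q(t)$ makes the unknown enter \emph{linearly}, producing the linear Volterra equation of the second kind (4.2) with continuously differentiable $F$ and $K$ (this is where Lemma 1 is spent). A linear Volterra equation is uniquely and globally solvable on $[0,T]$ by the Neumann series, with no smallness of $T$ and no continuation argument. The positivity hypotheses $(\text{A}_1)_2$ and $(\text{A}_3)_2$, together with $\int_0^1y_n(x)\,dx>0$ from (2.1) and $E>0$, then give $F>0$ and $K\geq 0$, hence $q\geq F>0$, so that $p=q'/q$ is well defined and continuous. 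Contrary to your closing remark, these hypotheses are therefore not merely ``self-consistency'' checks: in the completed argument they are what prevents the reconstructed $q$ from vanishing, which $(\text{A}_2)_2$ alone cannot do. The cleanest repair of your proof is to adopt the same change of unknown; your system $(\ast)$ plus (3.2), rewritten in terms of $q$, is precisely the paper's equation (4.2).
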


\begin{proof}
We already know that the solution of the mixed problem (1.1)-(1.3) is
expressed via the series%
\begin{equation}
u(x,t)=\dsum\limits_{\underset{(n\neq n_{0})}{n=0}}^{\infty }\left[ \varphi
_{n}e^{-\lambda _{n}t-\int_{0}^{t}p(s)ds}\right] y_{n}(x)+\dsum\limits_{%
\underset{(n\neq n_{0})}{n=0}}^{\infty }\left[ \int_{0}^{t}f_{n}(s)e^{-%
\lambda _{n}(t-s)-\int_{s}^{t}p(\tau )d\tau }ds\right] y_{n}(x).
\end{equation}%
for arbitrary $p(t)\in \mathbf{C}\left[ 0,T\right] $. In addition $u(x,t)\in 
\mathbf{C}^{2,1}\left( \bar{D}_{T}\right) $.

Applying the overdetermination condition (1.4), we obtain the following
Volterra integral equation of the second kind with respect to $%
q(t)=e^{\int\limits_{0}^{t}p(s)ds}$:%
\begin{equation}
q(t)=F(t)+\dint\limits_{0}^{t}K(t,\tau )q(\tau )d\tau ,
\end{equation}%
where 
\begin{eqnarray}
F(t) &=&\frac{1}{E(t)}\dsum\limits_{\underset{(n\neq n_{0})}{n=0}}^{\infty }%
\left[ \varphi _{n}e^{-\lambda _{n}t}\dint\limits_{0}^{1}y_{n}(x)dx\right] ,
\notag \\
&& \\
K(t,\tau ) &=&\frac{1}{E(t)}\dsum\limits_{\underset{(n\neq n_{0})}{n=0}%
}^{\infty }\left[ f_{2n-1}(\tau )e^{-\lambda _{n}\left( t-\tau \right)
}\dint\limits_{0}^{1}y_{n}(x)dx\right] .  \notag
\end{eqnarray}

In the case of existence of the positive solution of (4.2) in class $\mathbf{%
C}^{1}\left[ 0,T\right] $, the function $p(t)$ can be determined from $%
q(t)=e^{\int\limits_{0}^{t}p(s)ds}$ as 
\begin{equation}
p(t)=\frac{q^{\prime }(t)}{q(t)}.
\end{equation}

By using the boundness of the sequence $\dint%
\limits_{0}^{1}y_{n}(x)dx,n=0,1,2,\ldots $ and inequality (2.5) in Lemma 1,
under the assumptions $({A}_{1})_{1}-$ $({A}_{3})_{1},$ the right-hand side $%
F(t)$ and the kernel $K(t,\tau )$ are continuously differentiable functions
in $[0,T]$ and $[0,T]\times \lbrack 0,T],$ respectively. In addition,
according to the assumptions $({A}_{1})_{2}-({A}_{3})_{2}$ and formula
(2.1), the conditions $F(t)>0$ and $K(t,\tau )\geq 0$ are satisfied in $%
[0,T] $ and $[0,T]\times \lbrack 0,T],$ respectively.

In addition, the solution of (4.2) is given by the series 
\begin{equation*}
q(t)=\sum\limits_{n=0}^{\infty }(\mathbf{K}^{n}F)(t),
\end{equation*}%
where $(\mathbf{K}F)(t)\equiv \dint\limits_{0}^{t}K(t,\tau )F(\tau )d\tau $.
It is easy to verify that 
\begin{equation*}
\left\vert (\mathbf{K}^{n}F)(t)\right\vert \leq \left\Vert F\right\Vert _{%
\mathbf{C}\left[ 0,T\right] }\frac{\left( t\left\Vert K\right\Vert _{\mathbf{%
C}\left( [0,T]\times \lbrack 0,T]\right) }\right) ^{n}}{n!},\text{ \ }t\in %
\left[ 0,T\right] ,\text{ }n=0,1,\ldots \text{.}
\end{equation*}%
Thus, we obtain the estimate 
\begin{equation}
\left\Vert q\right\Vert _{\mathbf{C}\left[ 0,T\right] }\leq \left\Vert
F\right\Vert _{\mathbf{C}\left[ 0,T\right] }e^{T\left\Vert K\right\Vert _{%
\mathbf{C}\left( [0,T]\times \lbrack 0,T]\right) }}.
\end{equation}

We therefore obtain a unique positive function $q(t)$, continuously
differentiable in $[0,T]$, which the function (4.4) together with the
solution of the problem (1.1)-(1.3) given by the Fourier series (4.1), form
the unique solution of the inverse problem (1.1)-(1.4). Theorem 3 has been
proved.
\end{proof}

The following result on continuously dependence on the data of the solution
of the inverse problem (1.1)-(1.4) holds.

\begin{theorem}
(Continuous dependence upon the data) Let $\mathbf{\Im }$ be the class of
triples in the form of $T=\{f,$ $\varphi ,$ $E\}$ which satisfy the
assumptions $({A}_{1})-({A}_{3})$ of Theorem 3 and%
\begin{equation*}
\Vert f\Vert _{\mathbf{C}^{3,0}(\overline{D}_{T})}\leq N_{0},\Vert \varphi
\Vert _{\mathbf{C}^{3}[0,1]}\leq N_{1},\Vert E\Vert _{\mathbf{C}%
^{1}[0,T]}\leq N_{2},0<N_{3}\leq \underset{t\in \lbrack 0,T]}{\min }%
\left\vert E(t)\right\vert ,
\end{equation*}%
for some positive constants $N_{i},$ $i=0,$ $1,$ $2,$ $3$.

Then the solution pair $\left( u,p\right) $ of the inverse problem
(1.1)-(1.4) depends continuously upon the data in $\mathbf{\Im }$.
\end{theorem}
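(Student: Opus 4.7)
The plan is to propagate Lipschitz stability along the chain $(f,\varphi,E)\to(F,K)\to q\to(p,u)$, where $q(t)=\exp\bigl(\int_0^t p(s)\,ds\bigr)$ is the solution of the Volterra equation (4.2) already used in Theorem~3, and $u$ is recovered from the series (4.1). Since the class $\mathbf{\Im}$ supplies uniform $\mathbf{C}^{3,0}$, $\mathbf{C}^3$, and $\mathbf{C}^1$ bounds on $f$, $\varphi$, $E$ together with the positive lower bound $|E|\ge N_3$, all constants below may be taken to depend only on $N_0,\dots,N_3$.

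\emph{Step 1: data $\to (F,K,F',K_t)$.} Write $F-\tilde F$ and $K-\tilde K$ as sums of two contributions, one from the factor $\tfrac{1}{E}-\tfrac{1}{\tilde E}$ and one from the coefficient differences $\varphi_n-\tilde\varphi_n$ and $f_n-\tilde f_n$. Using $|1/E|,|1/\tilde E|\le 1/N_3$, the uniform boundedness of $\{\int_0^1 y_n(x)\,dx\}$, and Lemma~1 and Corollary~1 applied to $\varphi-\tilde\varphi$ and to $f(\cdot,t)-\tilde f(\cdot,t)$ uniformly in $t$, one obtains
\[
\|F-\tilde F\|_{\mathbf{C}[0,T]}+\|K-\tilde K\|_{\mathbf{C}([0,T]^2)}\le C_1\bigl(\|\varphi-\tilde\varphi\|_{\mathbf{C}^3[0,1]}+\|f-\tilde f\|_{\mathbf{C}^{3,0}(\bar D_T)}+\|E-\tilde E\|_{\mathbf{C}[0,T]}\bigr).
\]
Differentiating (4.3) in $t$ produces factors $\lambda_n$ in the series; these are absorbed by the identity $\lambda_n(\varphi,u_n)=-(\varphi'',u_n)$ from the proof of Lemma~1, giving analogous Lipschitz bounds on $F'-\tilde F'$ and $K_t-\tilde K_t$, now involving $\|E-\tilde E\|_{\mathbf{C}^1[0,T]}$.

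\emph{Step 2: $(F,K)\to q$.} The identity
\[
q-\tilde q=(F-\tilde F)+\int_0^t K(t,\tau)(q-\tilde q)(\tau)\,d\tau+\int_0^t (K-\tilde K)(t,\tau)\,\tilde q(\tau)\,d\tau,
\]
combined with the a~priori bound (4.5) on $\|\tilde q\|_{\mathbf{C}[0,T]}$ and Gronwall's inequality, yields $\|q-\tilde q\|_{\mathbf{C}[0,T]}\le C_2(\|F-\tilde F\|+\|K-\tilde K\|)$. Differentiating (4.2) in $t$ gives a Volterra equation for $q'$ with the same kernel $K$, and the same Gronwall argument controls $\|q'-\tilde q'\|_{\mathbf{C}[0,T]}$ by the Step~1 quantities.

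\emph{Step 3: $q\to(p,u)$.} The compatibility condition $E(0)=\int_0^1 \varphi(x)\,dx$ in $(\mathrm{A}_2)_1$ gives $q(0)=\tilde q(0)=1$, and the uniform Step~2 bound on $\|q'\|_{\mathbf{C}[0,T]}$ forces a lower bound $q(t),\tilde q(t)\ge\delta>0$ on $[0,T]$ with $\delta$ depending only on the $N_i$. The identity $p-\tilde p=(q'-\tilde q')/q-\tilde q'(q-\tilde q)/(q\tilde q)$ then yields continuous dependence of $p$. For $u$, substitute $e^{-\int_0^t p(s)\,ds}=1/q(t)$ and $e^{-\int_s^t p(\tau)\,d\tau}=q(s)/q(t)$ in (4.1), subtract the analogous series for $\tilde u$, and estimate term by term using the majorants of Lemma~1 and Corollary~1 together with the Step~2 bound on $\|q-\tilde q\|$. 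The main technical obstacle is Step~1, specifically the bound on $K_t-\tilde K_t$: the factor $\lambda_n$ produced by the $t$-derivative of the exponential must be transferred onto $f(\cdot,t)$ through integration by parts inside the Fourier coefficient, and it is exactly the smoothness requirement $f(\cdot,t)\in\mathbf{\Phi}_{n_0}$ together with Lemma~1 that makes this transfer uniform in $t$.
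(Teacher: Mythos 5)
Your proposal follows essentially the same route as the paper's proof: Lipschitz estimates for $F-\bar F$, $K-\bar K$ and their $t$-derivatives via Lemma~1 and Corollary~1 (with the $\lambda_n$ produced by differentiating the exponential absorbed exactly as you describe), then the Volterra identity for $q-\bar q$ plus Gronwall and the bound (4.5), then $p=q'/q$ and the series (4.1) for $u$. The one local slip is in your Step~3: $q(0)=1$ together with a uniform bound $\|q'\|_{\mathbf{C}[0,T]}\le L$ gives only $q(t)\ge 1-Lt$, which is no lower bound once $t>1/L$; the positive lower bound on $q$ should instead come from the Neumann series $q=\sum_{n}(\mathbf{K}^{n}F)\ge F>0$ (using $F>0$ and $K\ge 0$, which follow from $(\mathrm{A}_1)_2$--$(\mathrm{A}_3)_2$ and (2.1)), compactness of $[0,T]$, and then transfer to $\bar q$ by the already established smallness of $q-\bar q$. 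The paper itself passes over this division-by-$q$ point in silence, so with that repair your argument coincides with its proof.
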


\begin{proof}
Let us denote $\left\Vert T\right\Vert =(\left\Vert E\right\Vert _{\mathbf{C}%
^{1}\left[ 0,T\right] }+\left\Vert \varphi \right\Vert _{\mathbf{C}^{3}\left[
0,1\right] }+\left\Vert f\right\Vert _{\mathbf{C}^{3,0}(\overline{D}_{T})})$.

Let $\ T=\{f,$ $\varphi ,$ $E\}$, $\ \bar{T}=\{\bar{f},$ $\bar{\varphi},$ $%
\bar{E}\}$ $\in $ $\Im $ be two sets of data.\ Let $\left( p,u\right) $ and $%
\left( \bar{p},\bar{u}\right) $ be solutions of inverse problems (1.1)-(1.4)
corresponding to the data $\Phi $ and $\bar{\Phi}$, respectively. Denote by $%
q(t)=e^{\int\limits_{0}^{t}p(s)ds}$, $\bar{q}(t)=e^{\int\limits_{0}^{t}\bar{p%
}(s)ds}$.

According to (4.1)and (4.2) we get:%
\begin{eqnarray}
q(t) &=&F(t)+\dint\limits_{0}^{t}K(t,\tau )q(\tau )d\tau ,  \notag \\
&& \\
F(t) &=&\frac{1}{E(t)}\dsum\limits_{\underset{(n\neq n_{0})}{n=0}}^{\infty }%
\left[ \varphi _{n}e^{-\lambda _{n}t}\dint\limits_{0}^{1}y_{n}(x)dx\right] ,
\notag \\
&&K(t,\tau )=\frac{1}{E(t)}\dsum\limits_{\underset{(n\neq n_{0})}{n=0}%
}^{\infty }\left[ f_{n}(\tau )e^{-\lambda _{n}\left( t-\tau \right)
}\dint\limits_{0}^{1}y_{n}(x)dx\right] ,  \notag
\end{eqnarray}

and

\begin{eqnarray}
\bar{q}(t) &=&\bar{F}(t)+\dint\limits_{0}^{t}\bar{K}(t,\tau )\bar{q}(\tau
)d\tau ,  \notag \\
&& \\
\bar{F}(t) &=&\frac{1}{\bar{E}(t)}\dsum\limits_{\underset{(n\neq n_{0})}{n=0}%
}^{\infty }\left[ \bar{\varphi}_{n}e^{-\lambda
_{n}t}\dint\limits_{0}^{1}y_{n}(x)dx\right] ,  \notag \\
&&\bar{K}(t,\tau )=\frac{1}{\bar{E}(t)}\left( \dsum\limits_{\underset{(n\neq
n_{0})}{n=0}}^{\infty }\left[ \bar{f}_{n}(\tau )e^{-\lambda _{n}\left(
t-\tau \right) }\dint\limits_{0}^{1}y_{n}(x)dx\right] \right) ,  \notag
\end{eqnarray}%
Taking into account the inequality in Corollary 1 the next inequalities will
be true: 
\begin{eqnarray}
\text{ }\left\vert F(t)\right\vert &\leq &\frac{M}{\left\vert
E(t)\right\vert }\left( \dsum\limits_{\underset{(n\neq n_{0})}{n=0}}^{\infty
}\left\vert \left( \varphi ,u_{n}\right) \right\vert \right) \leq \\
&\leq &\frac{M}{N_{3}}\left\Vert \varphi \right\Vert _{\mathbf{C}%
^{3}[0,1]}\leq \frac{M}{N_{3}}CN_{1},\text{ \ \ }  \notag \\
&&  \notag \\
\left\vert \bar{K}(t,\tau )\right\vert &\leq &\frac{M}{N_{3}}\underset{t\in
\lbrack 0,T]}{\max }\left\Vert \bar{f}(\cdot ,t)\right\Vert _{\mathbf{C}%
^{3}[0,1]}\leq \frac{M}{N_{3}}CN_{0},\text{ \ }  \notag
\end{eqnarray}%
where $C$ is constant mentioned in Lemma 1, $M$ is the constant which $%
0<\dint\limits_{0}^{1}y_{n}(x)dx\leq M$.

First let us estimate the difference $q-\bar{q}$. From (4.6) and (4.7) we
obtain: 
\begin{equation}
q(t)-\bar{q}(t)=F(t)-\bar{F}(t)+\dint\limits_{0}^{t}\left[ K(t,\tau )-\bar{K}%
(t,\tau )\right] q(\tau )d\tau +\dint\limits_{0}^{t}\bar{K}(t,\tau )\left[
q(\tau )-\bar{q}(\tau )\right] d\tau
\end{equation}

Let $\alpha =\left\Vert F-\bar{F}\right\Vert _{\mathbf{C}\left[ 0,T\right]
}+T\left\Vert K-K\right\Vert _{\mathbf{C}(\left[ 0,T\right] \times \left[ 0,T%
\right] )}\left\Vert q\right\Vert _{\mathbf{C}\left[ 0,T\right] }$.

Then denoting $Q(t)=\left\vert q(t)-\bar{q}(t)\right\vert $, equation (4.9)
implies the inequality%
\begin{equation}
Q(t)\leq \alpha +\dint\limits_{0}^{t}\left\vert \bar{K}(t,\tau )\right\vert
Q(t)dt\text{.}
\end{equation}%
Applying the Gronwall inequality ([21, p. 9]), from (4.10) we obtain that 
\begin{equation*}
Q(t)\leq \alpha e^{\dint\limits_{0}^{t}\underset{s\in \lbrack \tau ,t]}{\sup 
}\left\vert \bar{K}(s,\tau )\right\vert Q(t)dt}\text{.}
\end{equation*}%
Finally using (4.5) and (4.8), we obtain 
\begin{equation*}
\left\Vert q-\bar{q}\right\Vert _{\mathbf{C}\left[ 0,T\right] }\leq
C_{1}(\left\Vert F-\bar{F}\right\Vert _{\mathbf{C}\left[ 0,T\right]
}+C_{2}\left\Vert K-\bar{K}\right\Vert _{\mathbf{C}(\left[ 0,T\right] \times %
\left[ 0,T\right] )}),
\end{equation*}%
where $C_{1}=e^{\frac{M}{N_{3}}CN_{0}T},$ $C_{2}=\frac{MT}{N_{3}}CN_{1}C_{1}$%
. It will be seen from (4.8) that $q$ continuously dependents upon $F$ and $%
K $.

Let us show that $F$ and $K$ continuously dependents upon the data. It is
easy to compute, with helping Corollary 1, that

\begin{eqnarray*}
\left\vert \bar{F}(t)-F(t)\right\vert &=&\left\vert \dsum\limits_{\underset{%
(n\neq n_{0})}{n=0}}^{\infty }\left[ \frac{\bar{\varphi}_{n}}{\bar{E}(t)}-%
\frac{\varphi _{n}}{E(t)}\right] e^{-\lambda
_{n}t}\dint\limits_{0}^{1}y_{n}(x)dx\right\vert \leq M\dsum\limits_{\underset%
{(n\neq n_{0})}{n=0}}^{\infty }\left\vert (\frac{\bar{\varphi}}{\bar{E}(t)}-%
\frac{\varphi }{E(t)},u_{n})\right\vert \\
&\leq &\frac{M\bar{C}N_{1}}{N_{3}^{2}}\left\Vert E-\bar{E}\right\Vert _{_{%
\mathbf{C}\left[ 0,T\right] }}+\frac{M\bar{C}}{N_{3}}\Vert \varphi -\bar{%
\varphi}\Vert _{\mathbf{C}^{3}[0,1]}
\end{eqnarray*}

\begin{eqnarray*}
\left\vert \bar{K}(t,\tau )-K(t,\tau )\right\vert &\leq &M\dsum\limits_{%
\underset{(n\neq n_{0})}{n=0}}^{\infty }\left\vert \frac{\bar{f}_{n}(\tau )}{%
\bar{E}(t)}-\frac{f_{n}(\tau )}{E(t)}\right\vert \\
&\leq &\frac{M\bar{C}N_{0}}{N_{3}^{2}}\left\Vert E-\bar{E}\right\Vert _{_{%
\mathbf{C}\left[ 0,T\right] }}+\frac{M\bar{C}}{N_{3}}\left\Vert f-\bar{f}%
\right\Vert _{\mathbf{C}^{3,0}(\overline{D}_{T})},
\end{eqnarray*}

By using last inequalities we obtain: 
\begin{eqnarray*}
\left\Vert F-\bar{F}\right\Vert _{\mathbf{C}\left[ 0,T\right] } &\leq
&M_{1}(\left\Vert E-\bar{E}\right\Vert _{_{\mathbf{C}\left[ 0,T\right]
}}+\left\Vert \varphi -\bar{\varphi}\right\Vert _{_{\mathbf{C}%
^{3}[0,1]}}+\left\Vert f-\bar{f}\right\Vert _{_{\mathbf{C}^{3,0}(\overline{D}%
_{T})}})\leq M_{1}\left\Vert T-\bar{T}\right\Vert \\
\left\Vert K-\bar{K}\right\Vert _{\mathbf{C}(\left[ 0,T\right] \times \left[
0,T\right] )} &\leq &M_{2}\left\Vert T-\bar{T}\right\Vert
\end{eqnarray*}%
where $M_{1}$ and $M_{2}$ are constants that are determined by constants $%
N_{k},$ $k=0,...,3$, $M$ and $\bar{C}$. This means that $F$ and $K$
continuously dependent upon the data. Thus, $q$ also continuously dependents
upon the data, by (4.3).

Now, let us show that $q^{\prime }$ also depends continuously upon the data.
Differentiating (4.6) and (4.7) with respect to $t$, we can obtain following
representation:%
\begin{eqnarray*}
q^{\prime }(t) &=&F^{\prime }(t)+K(t,t)q(t)+\dint\limits_{0}^{t}K_{t}(t,\tau
)q(\tau )d\tau , \\
\bar{q}^{\prime }(t) &=&\bar{F}^{\prime }(t)+\bar{K}(t,t)\bar{q}%
(t)+\dint\limits_{0}^{t}\bar{K}_{t}(t,\tau )\bar{q}(\tau )d\tau .
\end{eqnarray*}

The following estimation holds:%
\begin{eqnarray}
\left\Vert q^{\prime }-\bar{q}^{\prime }\right\Vert _{\mathbf{C}\left[ 0,T%
\right] } &\leq &\left\Vert F^{\prime }-\bar{F}^{\prime }\right\Vert _{%
\mathbf{C}\left[ 0,T\right] }  \notag \\
&&+\left( \left\Vert K-\bar{K}\right\Vert _{\mathbf{C}\left( [0,T]\times
\lbrack 0,T]\right) }+T\left\Vert K_{t}-\bar{K}_{t}\right\Vert _{\mathbf{C}%
\left( [0,T]\times \lbrack 0,T]\right) }\right) \left\Vert q\right\Vert _{%
\mathbf{C}\left[ 0,T\right] } \\
&&+\left( \left\Vert \bar{K}\right\Vert _{\mathbf{C}\left( [0,T]\times
\lbrack 0,T]\right) }+T\left\Vert \bar{K}_{t}\right\Vert _{\mathbf{C}\left(
[0,T]\times \lbrack 0,T]\right) }\right) \left\Vert q-\bar{q}\right\Vert _{%
\mathbf{C}\left[ 0,T\right] },  \notag
\end{eqnarray}%
Taking into account the facts that $q$, $F$ and $K$ continuously dependent
upon the data, using the inequality (4.5) and inequality 
\begin{equation*}
\left\vert \bar{K}_{t}(t,\tau )\right\vert \leq \frac{N_{2}}{N_{3}^{2}}%
C\left\Vert \bar{f}\right\Vert _{\mathbf{C}^{3,0}(\overline{D}_{T})}+\frac{C%
}{N_{3}}\left\Vert \bar{f}\right\Vert _{\mathbf{C}^{3,0}(\overline{D}%
_{T})}\leq \left( \frac{N_{2}}{N_{3}^{2}}+\frac{1}{N_{3}}\right) CN_{0},%
\text{ }
\end{equation*}%
it will be seen that $q^{\prime }$ depends continuously upon the $F^{\prime
} $ and $K_{t}$. By using (2.5), we can obtain similar estimations for $%
\left\Vert F^{\prime }-\bar{F}^{\prime }\right\Vert _{\mathbf{C}\left[ 0,T%
\right] }$ and $\left\Vert K_{t}-\bar{K}_{t}\right\Vert _{\mathbf{C}\left(
[0,T]\times \lbrack 0,T]\right) }$, as 
\begin{eqnarray*}
\left\Vert F^{\prime }-\bar{F}^{\prime }\right\Vert _{\mathbf{C}\left[ 0,T%
\right] } &\leq &\frac{2MCN_{1}}{N_{3}^{2}}\left\Vert E-\bar{E}\right\Vert
_{_{\mathbf{C}^{1}\left[ 0,T\right] }}+\frac{MC}{N_{3}}\Vert \varphi -\bar{%
\varphi}\Vert _{\mathbf{C}^{3}[0,1]}, \\
\left\Vert K_{t}-\bar{K}_{t}\right\Vert _{\mathbf{C}\left( [0,T]\times
\lbrack 0,T]\right) } &\leq &\frac{2MCN_{0}}{N_{3}^{2}}\left\Vert E-\bar{E}%
\right\Vert _{_{\mathbf{C}^{1}\left[ 0,T\right] }}+\frac{MC}{N_{3}}%
\left\Vert f-\bar{f}\right\Vert _{\mathbf{C}^{3,0}(\overline{D}_{T})}.
\end{eqnarray*}%
Thus, using these inequalities from (4.11) we obtain 
\begin{equation*}
\left\Vert q^{\prime }-\bar{q}^{\prime }\right\Vert _{\mathbf{C}\left[ 0,T%
\right] }\leq M_{3}(\left\Vert E-\bar{E}\right\Vert _{_{\mathbf{C}^{1}\left[
0,T\right] }}+\Vert \varphi -\bar{\varphi}\Vert _{\mathbf{C}%
^{3}[0,1]}+\left\Vert f-\bar{f}\right\Vert _{\mathbf{C}^{3,0}(\overline{D}%
_{T})})\leq M_{3}\left\Vert T-\bar{T}\right\Vert ,
\end{equation*}%
where $M_{3}$ is constant that is determined by constants $N_{k},$ $%
k=0,...,3 $, $M$ and $C$. It means that $q^{\prime }$ depends continuously
upon the data as well.

The equality $p(t)=\frac{q^{\prime }(t)}{q(t)}$ implies continuously
dependence of $p$ upon the data. Using the similar what we demonstrated
above we can prove that $u$, which is given in (4.1), depends continuously
upon the data. Theorem 4 has been proved.
\end{proof}

\section{Conclusion}

In this paper, we consider the initial-boundary value problem for the heat
equation with a dinamic type boundary condition which is observed in the
process of cooling of a thin solid bar one end of which is placed in contact
in the case of perfect thermal contact and in a boundary reaction in
diffusion of chemical. The Fourier method on the eigenfunctions of an
auxiliary spectral problem with the boundary condition which is dependent on
spectral parameter, is suitable for studying the problem under
consideration. Under some regularity, consistency and orthogonality
conditions, the existence, uniqueness and continuously dependence upon the
data of the classical solution are shown. This paper also investigates the
inverse problem of finding a coefficient of the heat equation from integral
overdetermination condition's data.

\end{document}